\newtheorem{theorem}{Theorem}[section]
\newtheorem{lm}[theorem]{Lemma}
\newtheorem{compsol*}{Complete Solution}
\date{\displaydate{date}}
\begin{document}

\title{\LARGE \bf  Planning of  School Teaching
during COVID-19.}

\author{ Alberto Gandolfi }

\affil{NYU Abu Dhabi}

\vspace{5mm}

%
%
%

\maketitle

\thispagestyle{plain}
\pagestyle{plain}

\let\thefootnote\relax\footnotetext{Address: New York University Abu Dhabi, 
PO Box 129188, Abu Dhabi, UAE}
\let\thefootnote\relax\footnotetext{
AMS 2010 subject classifications. .

Key words and phrases. 

Running Title: High School planning}

\begin{abstract}
More than one billion students are out of school
because of Covid-19, forced to a remote learning
that has several drawbacks and has been hurriedly 
arranged; in addition, most countries
are currently uncertain on how to  plan  school activities for the
2020-2021 school year; all of this makes learning and education some of the biggest world issues
of the current pandemic. Unfortunately, due
to the length of the incubation period of Covid-19, 
full opening of schools seems to be impractical 
till a vaccine is available.
In order to support the possibility of  some  in-person learning,
we study a mathematical model of the 
diffusion of the epidemic due to  school opening,
and evaluate plans aimed at containing the extra
Covid-19 cases due to school activities while
ensuring an adequate number of in-class learning periods.

We consider a SEAIR model with an external source of 
infection and  
a suitable loss function; after  a realistic parameter selection,
we numerically determine optimal school opening
strategies by simulated annealing.
It turns out that blended models, with almost periodic alternations of in-class
and remote teaching days or weeks, are generally optimal. 
Besides containing Covid-19 diffusion, these solutions 
could be pedagogically acceptable, and could
also become a driving model for the society at large.

In a prototypical example, the optimal strategy results in the school opening $90$ days out 
of $200$ with the number of Covid-19 cases among the individuals related to the school increasing by 
 about $66\%$, instead of the about $250\%$ increase that
would have been a consequence of  full opening.

\end{abstract}

\newpage

\section{Introduction}

More than one billion students are out of school
because of Covid-19 \cite{Unesco 2020-03-04};
 most of them are now using 
  remote learning practices 
that present several drawbacks  \cite{Unesco 2020-03-10}  and
in most cases  have been hurriedly 
arranged.

 In addition, in most countries there is uncertainty on
 how to  plan  school activities for the
2020-2021 school year \cite{Unicef 20}. As a result,  learning and
education are two of the biggest world issues
of the current pandemic \cite{Ciano20, Lyst 20}. Unfortunately, due
to the length of the incubation period of Covid-19 of 
about $5$ days \cite{QL, K-W}, 
full opening of schools seems to be impractical 
till a vaccine is available; this has  been experienced by the
first experiments of reopening \cite{Nelson et al. 20, CBS 20},
and it is further
 confirmed by the simulations in the present paper.

In order to support the possibility of  some  in-person learning,
we study a mathematical model of the 
diffusion of the epidemic due to  school opening.
More specifically, we consider an adaptation of the SEIR model
to the problem of planning school teaching 
during the school year 2020-2021, when the COVID-19
epidemic is very likely to  still be active.
The aim is to optimize the number of
in-class teaching days vs. remote ones in a  school setting,
while containing the number of extra
infectious transmissions due to opening
of the school.

To this purpose, we set-up a suitable 
system of ordinary differential equations
describing the evolution of the epidemic
among the personnel and the students of a school.
The model, in analogy to SEIR, has susceptible individuals 
(S)
to whom the virus can be transmitted
either from external sources (when
they are not at school), or from contacts with other
individuals
within the school. Susceptible individuals  become pre-symptomatic or, equivalently, exposed
(E)
at transmission, entering a latency
period  in which they are already
contagious; they then either show symptoms (I), or
remain asymptomatic (A); both situations resolve, 
and individuals recover (R). As numbers are
limited, and COVID-19 mortality is particularly
low among young individuals, we disregard mortality.
The key differences with the usual SEIR model \cite{CHBC-C} are:
an external source
of infection \cite{Naji2014}, \cite{BANERJEE2014}, the possibility of transmission limited
to $7$ hours per working day, a control allowing
preventative closure of the school each day, and
the presence of asymptomatic individuals. The
asymptomatic members of the community present
a challenge, as they cannot be easily detected
while still being infectious for a relatively
long time: we assume
 accessibility to some type of fast screening  
 with a minimal
sensitivity \cite{Wyllie2020}.

We further assume that the objective of the planner is to 
maximize the number of days with in-class 
vs. remote teaching, and simultaneously minimize
the number of extra infectious transmissions due
the opening of the school. For this, it is 
essential that the planner determines
 the relative weight $\tau$ of these two
objectives, but our set-up
applies to any possible value of $\tau$.
 Once a value of $\tau$ is fixed, the aim
of the planner becomes then to minimize a functional
that combines
 these two objectives
by the relative weight. To achieve her/his targets,  the planner can decide
 ahead of time
which days are in-class and which are
restricted to remote teaching; for simplicity, we consider here a plan for the whole year, but 
monitoring and en route adjustments can be also incorporated. We remain agnostic as to the value
of the relative weight $\tau$, and develop tools
to analyze all possible scenarios.

We start from a careful evaluation
of all the parameters of the model, except
$\tau$, and 
for each scheduling plan we numerically evaluate
the loss functional. We repeat this for 
random selections of the scheduling plans,
obtaining a numerical description of the 
distribution of the outcomes in terms of 
remote days and additional infectious transmissions.
In addition, for each value of $\tau$, 
we use simulated annealing to optimize
with respect to 
 weekly openings or closures.
 A model sensitivity analysis is then carried out
to assess the effect of parameter evaluation errors.
 This gives a complete picture of the 
 potentialities of the optimization procedure.

As we assume an external source of infection,
a fraction of the individuals related to the school
 gets infected even with complete closure
and remote only teaching. In a prototypical simulation,   
$12.2\%$ of the individuals are subject of an infectious transmission
during the school year from contacts outside the school; they become 
pre-symptomatic or, as we indicate them, exposed (E).
Recall that in our model, exposed individuals 
have received an infectious transmission; according
to current estimates \cite{Nishiura2020}, 
about half of the exposed individuals develop symptoms (I). On the other hand, if 
the teaching is completely in-class,
about $42.9\%$ of the individuals get the virus:
that is almost half of the school population, and
 an increment of $251.63\%$ with respect to
 completely remote. 
The outcome of the optimization procedure 
depends on the  relative weight $\tau$,
but most optimal solutions turn out
to be blended, with periodic alternations
between weeks of in-class activities, and
weeks of remote teaching: solutions of this
type have been advocated and are being planned in Scotland \cite{Lyst 20}. In one such solution,
there are $90$ days of in-class teaching,
with the additional infectious transmissions
contained to $7.9\%$, which is an increase of $64.75\%$
of the cases
with respect to remote only activity.
In a school of $1000$ individuals this corresponds
to creating an extra $79$ cases of COVID-19
patients, next to the $122$ who would have been infected
in any case; considering that about half of
the cases are asymptomatic, this corresponds
to having about $40$ extra symptomatic cases
in exchange for almost half of the year
spent in class. The administrators
can decide to realize a stricter or
more relaxed containment of the extra cases by
assigning more or less weight to infectious transmissions.
The solution to achieve the above results is
  (0, 0, 0, 1, 0, 1, 0, 0, 0, 1, 0, 1, 0, 1, 0, 1, 0, 1, 0, 1, 0, 1, 0, 
1, 0, 1, 0, 1, 0, 1, 0, 1, 0, 1, 0, 1, 0, 1, 0, 1),
where $0$ indicates a week of remote, and $1$ a week 
of in-class activities; notice that
there is an initial  period of closure, 
before progressing to a regular alternation of
openings and closures.
If $18$ out of the $40$ weeks are open at random,
the same number of in-class days as above but with no plan,
then the increment in the number of cases is
$88\%$ on average, with a Standard Deviation of 
$10.25\%$: making an random selection of 
the weeks of in-class activities would thus result
in a substantial reduction of the number of
extra infections; on the other hand,
 the optimal solution is statistically better, as it is 
more than $2$ SD's below what the average
random selection would provide, and it is also not subject to 
possible adverse fluctuations.

In Section
\ref{SensitivityAnalysis} we carry out a sensitivity analysis, which shows that
the optimal planning is pretty stable 
in case of some extra opening or remote days,
and is only moderately affected by 
errors in parameter selection, within certain ranges.
The most significant parameters are 
the infectious rate for contacts in the school 
$\beta$, the 
duration of the incubation period before symptoms are
developed $1/\gamma$, and the capability of screening for
asymptomatic individuals $\eta$. Our results are acceptable
if
the incubation period is below $7$ days on average:
while current estimation give about $5$ days \cite{QL, K-W},
this parameter needs to be monitored from the 
evolving studies. In this study, we take the infectious rate for contacts
in the school at a much higher than all current estimates,
but since a school is a densely populated
environment, precautions and distancing methods must 
applied as much as possible not to overcome even our pessimistic 
selection. For the fraction of
detected asymptomatic individuals, we assume 
that this is at least $45\%$: this fraction
relies on testing, that should then be performed
regularly, possibly with simple and inexpensive, even if not
totally reliable, methods \cite{Chekani-Azar2020}.

\medskip

{\bf Related studies.} Optimization methods similar to those used
 in this paper have appeared in many 
other contexts.
In fact, several recent 
studies have considered optimal planning
strategies in response to  general epidemics
\cite{CHBC-C, jones2020optimal} or, more specifically,
to the
COVID-19 pandemic \cite{alvarez2020simple, GKK}; 
 to our knowledge, however,  none has considered the
current optimization problem for school planning.

\medskip

{\bf Limitations.} There are several  limitations to this study.
Our results are only a first indication of a modeling methodology for the search of an optimal trade off between
in-class teaching and containment of infectious transmissions.
Even if parameters are carefully and realistically
selected, the values are based on information
 known at the time of this study; when a parameter
 has a wide range of variability we settle for
 one representative value; in addition,  the worked out examples referred to an ideal school:
for each particular concrete situation, 
one needs to adapt the model to the specific
case. 

We also did not consider other alternatives, such as
having half of the classes, or reducing the 
number of in-class students for each class. These
alternatives could be conveniently incorporated
in a more elaborate model.

\section{A  SEAIR model with 
external source and containment}
\subsection{Epidemic model } \label{epidemic model}

We consider SEAIR, a  version of the SIR model
 \cite{CHBC-C}
with an external source of infection
\cite{Naji2014, Blackwood2018},
and a control.
The population is divided into: susceptible (S), pre-symptomatic or, equivalently, exposed (E),
asymptomatic (A),
infected (I) and recovered (R).
Variables are normalized so that $S+E+A+I+R=1$. 
 
 We assume that susceptible individuals
 might become exposed (E),
a phase  in which they have contracted the virus and are contagious, without showing symptoms. 
The contagion can be caused by contact with
other individuals with viral load, at a rate
$\beta c(t)$ ($c(t)$ is the control, as described below);
or, alternatively, it is caused by an infectious 
contact outside the school taking place at rate $\alpha$.
Exposed individuals either develop 
symptoms at a constant
rate $\delta$, becoming infected, or progress into being asymptomatic 
 with rate $\gamma$.
The viral load is carried by exposed (E), asymptomatic (A) or infectious (I) individuals; however, infectious 
individuals are
assumed to be isolated, while a large fraction of
asymptomatic (A) is supposed to be detected; hence,
if the school is open then
susceptible individuals are assumed to enter in contact
only with exposed, and a fraction $\eta$ of asymptomatic. 

Both asymptomatic and infected individuals recover at  rate $\rho$.

With these assumptions, the system of OdE
modeling infectious transmissions of concern
for the school is:
\begin{align} 
\text{Susceptible:\quad}\frac{dS}{dt} & =-\alpha S-\beta S c(t)(E+\eta A)\label{eq:dS}\\
\text{Exposed:\quad}\frac{dE}{dt} & =\beta S c(t)(E+\eta A)+
\alpha S-(\gamma+\delta)E \label{eq:dE}\\
\text{Asymptomatic:\quad}\frac{dA}{dt} & = \gamma E
-\rho A \label{eq:dA}\\
\text{Infected:\quad}\frac{dI}{dt} & = \delta E-\rho I\label{eq:dI}\\
\text{Recovered:\quad}\frac{dR}{dt} & =\rho (A+I)\label{eq:dR}
\end{align}
for a certain time interval $[0,T]$.
We consider $40$ weeks, and count time in hours, so that
$T=6720$.
The initial population at the beginning of the
school year might be partly immune due to previous
infections, but to avoid questions about efficacy and
duration of the immunity, we assume that the initial
population
consists primarily of susceptible, $S(0) \approx 1$, and a small fraction of exposed, 
so that $S(0)+E(0)=1$. 

The function $c(t)$ describes the control variable, and 
it is $c(t)=0$ for all times when the school is closed; these
include all hours except 
the opening times.
For each of the $200$ working days, 
$c(t)$ can be $0$ again if remote activities have been decided
for that day, or $1$
if teaching is in person. Formally, with $t$
measured in hours, let $\mathcal M$ indicate the class of controls $c(t)$
such that
\begin{equation} \label{c}
    c(t) =
    \begin{cases} c_{i,j} \quad \text{ if } 
    i= \lfloor t/168 \rfloor, j=
    \lfloor (t - 168 \lfloor t/168 \rfloor)/24 \rfloor \leq 5
    \quad \text{ and } \\
    \quad \quad \quad 8 \leq t -168 \lfloor t/168 \rfloor
    -24  \lfloor (t - 168 \lfloor t/168 \rfloor)/24 \rfloor
    \leq 15\\
     0 \quad \text{ otherwise }
    \end{cases}
\end{equation}
for some $c_{i,j} \in \{0,1\}, i=1, \dots, 40,
j=1,\dots,7$,
and $0 \leq t \leq 6720.$

\subsection{Mathematical analysis } \label{Math Anal}
\begin{lm} \label{preserve}
For each solution of \eqref{eq:dS}-\eqref{eq:dR},  the total population
$S+E+A+I+R$ is preserved.
\end{lm}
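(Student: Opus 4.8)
The plan is to sum the five equations \eqref{eq:dS}--\eqref{eq:dR} and verify that every term on the combined right-hand side cancels, so that $\frac{d}{dt}(S+E+A+I+R)\equiv 0$. First I would write
\begin{equation*}
\frac{d}{dt}(S+E+A+I+R)=\frac{dS}{dt}+\frac{dE}{dt}+\frac{dA}{dt}+\frac{dI}{dt}+\frac{dR}{dt}
\end{equation*}
and substitute the five right-hand sides. Then I would group the terms according to their origin. The external-source term $\alpha S$ enters \eqref{eq:dE} with a plus sign and leaves \eqref{eq:dS} with a minus sign, so it cancels; identically, the in-school transmission term $\beta S c(t)(E+\eta A)$ appears with opposite signs in \eqref{eq:dS} and \eqref{eq:dE} and cancels. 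The outflow $(\gamma+\delta)E$ from the exposed compartment is split into the inflow $\gamma E$ into $A$ (from \eqref{eq:dA}) and the inflow $\delta E$ into $I$ (from \eqref{eq:dI}), so $-(\gamma+\delta)E+\gamma E+\delta E=0$. Finally, the recovery outflows $-\rho A$ and $-\rho I$ from \eqref{eq:dA} and \eqref{eq:dI} are exactly matched by the inflow $\rho(A+I)$ into \eqref{eq:dR}. Every contribution thus cancels in pairs, and the total derivative vanishes.

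Once the derivative is shown to be zero, I would integrate in $t$ to conclude that $S+E+A+I+R$ equals its initial value, which by the normalization is $1$. This establishes that the total population is preserved along every solution.

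The computation is purely algebraic and presents no genuine difficulty; the only point warranting a remark is that the control $c(t)$ defined in \eqref{c} is piecewise constant and hence introduces jump discontinuities into the right-hand sides. Because the \emph{same} factor $c(t)$ multiplies the transmission term in both \eqref{eq:dS} and \eqref{eq:dE}, its value is irrelevant to the cancellation, which holds pointwise wherever the vector field is evaluated. Consequently, interpreting the solution in the Carath\'eodory sense (absolutely continuous functions satisfying the system for almost every $t$), the sum $S+E+A+I+R$ is continuous with derivative zero off a finite set of switching times, and is therefore constant on all of $[0,T]$. I would mention this briefly so that the discontinuity of $c(t)$ does not appear to undermine the differentiation step, but it is the single place where a word of care is needed.
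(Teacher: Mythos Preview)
Your proof is correct and follows the same idea as the paper's: compute $\frac{d}{dt}(S+E+A+I+R)$ and show it vanishes. The paper's version is terser---it sets $\phi=S+E+A+I+R$, writes $\frac{d(\phi-1)}{dt}=-n(\phi-1)$ (the symbol $n$ is undefined and appears to be a leftover typo; summing the right-hand sides gives exactly $0$), and invokes uniqueness for the linear ODE to conclude $\phi\equiv 1$---whereas you carry out the cancellation term by term and add a careful remark on the piecewise-constant control $c(t)$, which is a welcome clarification the paper omits.
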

\begin{proof}
If  $\Vec X(t)=(S(t),E(t),A(t),I(t),R(t))$ is a solution,
then let $\phi=S+E+A+I+R$; we have that $\phi(0)=1$  and 
$\frac{d\phi}{dt}=\frac{d(\phi-1)}{dt}
=-n(\phi-1)$, so that, since $(\phi-1)(0)=0$, necessarily $\phi \equiv 1$ by uniqueness of solutions of linear
differential equations.
\end{proof}
Notice that, in addition, $0\leq S,E,A,I, R
\leq 1$.

\begin{lm}
For all initial conditions $\Vec X(0)=(S(0),E(0),A(0),I(0),R(0))$ such that 
$0\leq S(0),E(0),A(0),I(0), R(0)
\leq 1$ there exists a unique solution
 of the system \eqref{eq:dS}-\eqref{eq:dR}
for all times.

\end{lm}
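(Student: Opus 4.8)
The plan is to combine a standard local existence--uniqueness theorem with an \emph{a priori} bound that confines the solution to a compact set, thereby ruling out finite--time blow-up. Writing the right-hand side of \eqref{eq:dS}--\eqref{eq:dR} as $F(t,\Vec X)$, I observe that $F$ is a polynomial (in fact quadratic) in the components of $\Vec X$, hence locally Lipschitz in $\Vec X$ uniformly on compact sets; its only time dependence enters through the control $c(t)$, which by \eqref{c} is piecewise constant with finitely many jumps on $[0,T]$ and takes values in $\{0,1\}$. On each subinterval where $c$ is constant the system is autonomous with smooth right-hand side, so the Picard--Lindel\"of theorem yields a unique local solution; patching across the finitely many discontinuities of $c$ (at which the state $\Vec X$ itself stays continuous) produces a unique maximal solution on some interval $[0,T_{\max})$.

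The remaining task is to show $T_{\max}$ exceeds any prescribed time, for which it suffices to bound $\Vec X$ \emph{a priori}. First I would establish positivity, i.e.\ that the nonnegative orthant is positively invariant. Indeed, on the face $\{S=0\}$ both terms of $dS/dt$ carry a factor $S$, so $dS/dt=0$ there; on $\{E=0\}$ one has $dE/dt=\beta S c(t)\eta A+\alpha S\ge 0$ whenever $S,A\ge 0$; and on $\{A=0\}$, $\{I=0\}$, $\{R=0\}$ one has respectively $dA/dt=\gamma E\ge 0$, $dI/dt=\delta E\ge 0$, $dR/dt=\rho(A+I)\ge 0$ whenever the remaining coordinates are nonnegative. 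These inward-pointing (Nagumo-type) conditions, verified on each interval of constancy of $c$ and inherited across the jumps by continuity, guarantee $S,E,A,I,R\ge 0$ for all $t\in[0,T_{\max})$ when this holds at $t=0$.

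Next I would invoke conservation. By Lemma \ref{preserve} the quantity $\phi=S+E+A+I+R$ is constant along solutions, since the contributions of every reaction term cancel in $d\phi/dt$; hence $\phi(t)=\phi(0)\le 5$. Combining this with positivity gives $0\le X_i(t)\le \phi(0)$ for each component and every $t\in[0,T_{\max})$, so the solution remains in the compact cube $[0,\phi(0)]^5$. By the standard continuation (escape) lemma, a solution confined to a fixed compact set cannot cease to exist in finite time, whence $T_{\max}=\infty$; uniqueness, already local, then propagates globally through the patching construction.

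The only genuine obstacle is that the quadratic term $\beta S c(t)(E+\eta A)$ makes $F$ merely locally, not globally, Lipschitz, so global existence is not automatic; this is exactly what the positivity-plus-conservation bound circumvents. The one technical care-point is the discontinuity of $c(t)$: because the cited theorems are phrased for continuous vector fields, one must either argue interval-by-interval as above or appeal to the Carath\'eodory existence theorem, the state being continuous across the jumps of the control.
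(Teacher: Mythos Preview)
Your argument is correct and follows the same overall strategy as the paper: bound the solution via conservation, use this to obtain a Lipschitz constant on the relevant region, and patch across the finitely many jump discontinuities of $c(t)$. The paper's proof is terser---it simply asserts that $S,E,A,I,R$ are bounded (relying on Lemma~\ref{preserve} and the remark following it that $0\le S,E,A,I,R\le 1$) and then reads off a global Lipschitz bound $\partial F_i/\partial X_j\le c_1$ directly---whereas you structure the argument more carefully as local existence $\to$ positive invariance via Nagumo conditions $\to$ conservation $\to$ compact confinement $\to$ continuation. Your explicit verification of positivity is a genuine addition: the paper states $0\le S,E,A,I,R\le 1$ without proof, and that inequality is precisely what makes the Lipschitz bound legitimate, so your Nagumo-type check fills a small gap in the paper's presentation. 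The trade-off is length for rigor; the mathematical content is the same.
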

\begin{proof}  
Since $c(t) \in \{0,1\}$ and 
$S,E,A,I,R$ are
bounded, the r.h.s. of the system \eqref{eq:dS}-\eqref{eq:dR} is Lipschitz \cite{Adkins12} in the variables 
$S, E, A, I , R$: in fact,  let $\Vec{F}(c,\Vec{X})$ be the vector-valued function having as components the right-hand sides of the S-E-A-I-R-D differential equations;
 we can then rewrite the system in vector form
$$
\Vec{X}'=\Vec{F}(c,\Vec{X}),\,\, \Vec{X}(0)=\Vec{X}^0.
$$
We have that $\frac{\partial \vec F_i}{\partial \vec X_j}
\leq c_1 \max_{1\leq i\leq 5}\left\{\max_{[0,T]}|X_i(t)|\right\} \leq c_1$  for some $c_1 >0$. This is a 
 sufficient condition
for existence and uniqueness of solutions
in each interval of continuity of $c(t)$ \cite{Adkins12}.
As  $c(t)$ has only jump discontinuities,
the unique solution of an interval can
be uniquely continued as a continuous function
into the next interval
\cite{Adkins12};
hence there always exist a unique solution for all
times $t$ of the system \eqref{eq:dS}-\eqref{eq:dR}. 
\end{proof}

\begin{lm}
If $(S(0)+E(0)+A(0)+I(0)+R(0))=1$ then there is a unique stationary solution, namely
$S=E=A=I=0, R=1$, which then attracts the solution for all initial conditions.
\end{lm}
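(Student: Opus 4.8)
The plan is to split the statement into two independent parts: the algebraic identification of the unique constant solution, and the global attractivity of that solution.

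For uniqueness I would set every right-hand side of \eqref{eq:dS}--\eqref{eq:dR} equal to zero and solve, exploiting the sign constraints $0\le S,E,A,I,R\le 1$ recorded after Lemma~\ref{preserve} together with the positivity of the rates $\alpha,\gamma,\rho>0$. Reading the equations in a convenient order, $\frac{dR}{dt}=\rho(A+I)=0$ forces $A+I=0$, hence $A=I=0$ since both are nonnegative; then $\frac{dA}{dt}=\gamma E-\rho A=0$ gives $E=0$ because $\gamma>0$; next $\frac{dS}{dt}=-\alpha S-\beta S c(t)(E+\eta A)=-\alpha S=0$ gives $S=0$ because $\alpha>0$; finally the conservation law $S+E+A+I+R=1$ of Lemma~\ref{preserve} forces $R=1$. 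I would also check that $(0,0,0,0,1)$ is genuinely a time-independent solution of the non-autonomous system: since $E=A=0$ the control term $\beta S c(t)(E+\eta A)$ vanishes identically in $t$, so the candidate is stationary irrespective of the particular schedule $c(\cdot)\in\mathcal M$.

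For attractivity I would argue by a triangular cascade that exploits the fact that the schedule enters only through $\beta S c(t)(E+\eta A)$, a term dominated by $S$ and hence negligible once $S$ decays. First, from \eqref{eq:dS} and the nonnegativity of every factor, $\frac{dS}{dt}\le-\alpha S$, so Grönwall gives $S(t)\le S(0)e^{-\alpha t}\to 0$ exponentially. Next I would rewrite \eqref{eq:dE} as $\frac{dE}{dt}=-(\gamma+\delta)E+g(t)$ with forcing $g(t)=\beta S(t)c(t)(E(t)+\eta A(t))+\alpha S(t)$, bounded crudely by $0\le g(t)\le(\beta(1+\eta)+\alpha)S(t)\le C e^{-\alpha t}$ using $c\le1$ and $E,A\le1$. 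Variation of constants then yields $E(t)=E(0)e^{-(\gamma+\delta)t}+\int_0^t e^{-(\gamma+\delta)(t-s)}g(s)\,ds$, and the convolution of the decaying kernel with the exponentially small $g$ tends to $0$ by an elementary estimate separating the two exponential rates, so $E(t)\to0$. The same step applied to \eqref{eq:dA} and \eqref{eq:dI}, now with forcing $\gamma E$ and $\delta E$ both tending to zero, gives $A(t)\to0$ and $I(t)\to0$; finally $R(t)=1-S(t)-E(t)-A(t)-I(t)\to1$ by Lemma~\ref{preserve}, so every admissible trajectory is attracted to $(0,0,0,0,1)$.

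The part needing the most care is the attractivity argument, precisely because the system is non-autonomous: one cannot simply linearise at the equilibrium and read off eigenvalues, since $c(t)$ switches between $0$ and $1$. The resolution, and the main idea, is that $c(t)$ multiplies only a coupling term controlled by $S$; once the Grönwall bound forces $S\to0$, the $(E,A,I)$ block decouples into a stable cascade driven by a vanishing source, so no information on the fine structure of $c(\cdot)$ is required. The one hypothesis I would flag explicitly is $\alpha>0$: if $\alpha=0$ then uniqueness fails, as every state $(S,0,0,0,1-S)$ would be stationary when the school is closed, so the external-source assumption is exactly what pins down the single equilibrium $R=1$.
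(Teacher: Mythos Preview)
Your proposal is correct and in fact more complete than the paper's own proof. For the uniqueness of the stationary point, the paper proceeds in the order $S\to E\to A,I\to R$: from \eqref{eq:dS} with $\alpha>0$ and nonnegativity one gets $S=0$, then \eqref{eq:dE} forces $E=0$, then \eqref{eq:dA}--\eqref{eq:dI} give $A=I=0$, and conservation yields $R=1$. You instead start from \eqref{eq:dR} to kill $A$ and $I$ first, then back-substitute; both orderings are straightforward and equivalent in spirit.

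The real difference is the attractivity claim. The paper's proof stops after identifying the equilibrium and does not justify the phrase ``which then attracts the solution for all initial conditions''. Your Gr\"onwall bound $S(t)\le S(0)e^{-\alpha t}$ followed by the cascade argument via variation of constants is exactly what is needed to fill that gap, and your observation that the non-autonomous term $\beta S c(t)(E+\eta A)$ is absorbed into the vanishing forcing is the right way to handle the time-dependent schedule. Your remark that $\alpha>0$ is indispensable (otherwise any $(S,0,0,0,1-S)$ is stationary) is also to the point and is implicit but unstated in the paper.
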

\begin{proof}
For a stationary solution, $\Vec{X}'=0$ implies $S=0$ by
\eqref{eq:dS}. Then, \eqref{eq:dE} implies $E=0$,
and \eqref{eq:dA}-\eqref{eq:dI} imply $A=I=0$.
By Lemma \ref{preserve}, it must then be $R=1$.
\end{proof}
We are, however, interested in a finite time interval
$T$, and hence in the transient solution up to time $T$.

\subsection{Planner's objectives } \label{plann obj}

The planner's objectives are incorporated in the model by 
a loss function, defined as follows.
One of the aims of the planner is to contain
the number of extra infectious transmissions due to
the opening of the school; 
let $  S_0(t)$ be the fraction of susceptible 
with completely remote teaching, that is with
$c \equiv 0$, at time $t$;
 $S_0(0)- S_0(T)$ is then the fraction of individual
who receive an infectious transmission even if
the school never opens; then, let $S(t)=S_c(t)$ be the fraction 
of susceptible with opening plan $c(t)$; we have that
$S_0(T)-S(T)=S_0(T)-S_c(T)=(S_c(0)-S_c(T))-(S_0(0)-S_0(T))
$ is the fraction of extra infectious
transmissions due to the opening of the school according
to plan $c(t)$. In addition, let $N(c)=400-\intop_0^Tc(t)dt/7 $
be the number of remote teaching days (the
factor $1/7$ is due to the number of hours
that the school is open in a regular day).
The 
loss function combining the two effects is then
\[ 
\mathcal{L}=(S_0(T)-S(T)) +(N(c)/\tau)
\]
where $\tau$ is the relative weight of 
a day of in-class teaching to  infectious transmissions:
$\tau/100$ can be interpreted as the number of
days of in-class teaching that  the planner considers
equivalent
to a $1\%$ increase in the infectious transmissions
in the school.

The planner's objective becomes then
\begin{equation} \label{LL}
    \min_{c \in \mathcal{M}} \mathcal{L}
\end{equation}
with $\mathcal{M}$ as in \eqref{c}.
It is easy to see that using the system \eqref{eq:dS}-\eqref{eq:dR} the optimization problem
can be cast in a more standard form \cite{Fleming75,GKK}, in which 
$$
\mathcal{L}= E_c(T)-E_0(T)
+ \int_0^T \left(  (\gamma+\delta)(E_c(t)-E_0(t)
+ \frac{1}{7 \tau} c^2(t)\right) dt
$$
where $E_0(t)$ is the exposed component of the 
solution of the system \eqref{eq:dS}-\eqref{eq:dR}
for the case $c\equiv 0$; however, as we optimize over the 
very restricted class of functions \eqref{c}, 
which are piecewise constant and
 depend on the finite number of parameters
$c_{i,j}$'s, the general theory of control
optimization is not needed here: \eqref{LL} becomes a
 discrete optimization problem.

\section{Simulations}

\subsection{Parameter selection}\label{calibration}
We select the epidemic parameters based on current 
observations. As time is counted in hours, we
need to scale all available estimates, usually
expressed in days, by a factor of $24$.

There is a large variability in
the estimations of the COVDI-19 infection rate $\beta$,
with estimated values tipically around $0.25$-$3$ \cite{AT}; 
however, as the school is likely to elicit more
frequent contacts, we adopt a fairly higher value
of $\beta = 0.9/24 \approx 3.75\times 10^{-2}$.

Duration of the latency period after infection
and before symptoms are developed has been 
estimated in about $5$ days (see for example \cite{QL} and \cite{K-W}), so that $\gamma + \delta \approx 1/5 \times 1/24=0.2/24$;
the fraction of asymptomatic is also quite problematic,
with estimates ranging from $5\%$ to $60\%$,
we adopt a value of $50\%$, slightly higher
than the average \cite{Nishiura2020};
therefore, we take $\gamma=\delta =
0.1/24 \approx 4.17  \times 10^{-3}$.
Similarly, the average recovery period is about
$7$ days, for mild cases
\cite{Byrne}, suggesting $\rho = 0.14/24 \approx5.8 \times 10^{-3}$;
 more severe cases (I) are excluded from 
 contacts, so their recovery rate is irrelevant: we
 use the same value of $\rho$.

 The parameter $\eta$ describes how asymptomatic
individuals are excluded from contacts with the other
individuals; such a separation depends 
on the availability of detecting tests: we assume
that a sufficiently reliable test is available, with a 
$90\%$ test sensitivity, so that 
$\eta = 0.1$. This assumption is subject to 
parameter sensitivity analysis in Section \ref{SensitivityAnalysis} below, where we see 
that a test sensitivity  of at least $45\%$ 
is needed for our calculations to make sense.

We finally consider the external rate
of infection.
At this moment, countries, with few
exceptions, have reported at most $1\%$
of infected, but data is not
considered reliable \cite{verity2020estimates}; 
we take an external infection rate 
$\alpha \approx  1.78 \times 10^{-5}$ such that
in the observation period of $40$ weeks
there is a moderate, but non negligible 
number of cases even with the school
being totally closed; with our selection
of parameters the individuals infected outside
of the school will be  about $12.2\%$ of the total.
The parameter $\alpha$ will be carefully monitored
in the sensitivity analysis, but its value
is seen to be irrelevant to our conclusions.

\begin{table}[htbp]
  \centering
  \caption{Recap of the model parameters and their 
estimated values.}
    \begin{tabular}{lr}
         \multicolumn{1}{l}{Parameter} & \multicolumn{1}{l}{Selected value}
         \\ \hline \vspace{0.1cm}
    $\alpha$ & $1.78 \times 10^{-5}$  \\ \vspace{0.1cm}     
    $\beta$ & $3.75 \times10^{-2} $  \\ \vspace{0.1cm}
    $\gamma$ & $4.17  \times10^{-3}$ \\ \vspace{0.1cm}
    $\delta$ & $4.17  \times10^{-3} $ \\  \vspace{0.1cm}
    $\rho$ &$5.83 \times 10^{-3}  $ \\  \vspace{0.1cm}
     $\eta$ & $0.1$   \\  \hline
    \end{tabular}%
  \label{ModelParameters}%
\end{table}%
As initial condition, we assume that there are a few
exposed individuals already present at the start
of the school year, so we start from $E(0)=0.01$


\subsection{Results}\label{Results}
For simplicity, we have assumed here that
the decision about remote or in-class teaching
is taken for each week, so the control $c_i=c_{i,j}$
depends only on $i =1, \dots, 40$. This
still includes $2^{40}$ possible policies.

With the parameters as in Table \ref{ModelParameters}, Column $2$, we have optimized by Simulated Annealing \cite{Kirkpatrick83} 
as follows.
First we have selected  one value of $\tau$; 
we have taken  the range $\tau \in [0.5,2000]$, 
this corresponds to between $0.5$ and $20$
 days of in-class
teaching rated equivalent to a $1\%$ increase
in the number of infected;
for each value of $\tau$, we have selected
a random vector $v \in \{0,1\}^{40}$, and then,
starting from the policy $c_i=v_i$, we have
optimized by Simulated Annealing.
Figure \ref{InfectedClosureDays} 
shows a plot of fraction of infected vs. days
of closure for each step of each simulation.

\begin{figure}[h!]
    \centering
    \includegraphics[scale=0.8]{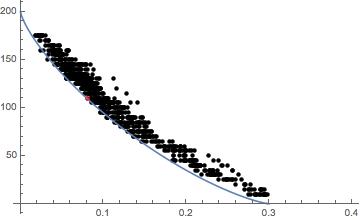}
    \caption{Fraction of infected vs  days
    of remote teaching, several random combinations
    and optimal ones. The optimal once are close to the
    continuous curve. One special optimal solution in red}
    \label{InfectedClosureDays}
\end{figure}
The optimal results for the various values of 
$\tau$ lie approximately on the curve
$(\frac{x}{0.3})^{0.76}+(\frac{y}{200})^{0.76}=1$,
which is also plotted in Figure \ref{InfectedClosureDays}.
Points approximately on the curve correspond to 
the optimal policies for the various values of 
$\tau$. One can see the effect of optimization
with respect to a random selection.
The above curve is only numerically determined,
and its significance is still unclear.

We have then selected one of the optimal
policies, plotted in red in Figure \ref{InfectedClosureDays}.
It corresponds to 
the following selection of open
and closed weeks 
\begin{equation}\label{OptimalPolicy}
\{0, 0, 0, 1, 0, 1, 0, 0, 0, 1, 0, 1, 0, 1, 0, 1, 0, 1, 0, 1, 0, 1, 0, 
1, 0, 1, 0, 1, 0, 1, 0, 1, 0, 1, 0, 1, 0, 1, 0, 1\},
\end{equation}
in which $0$ stands for a week of remote teaching.
Notice that the solution is quite periodic, with
alternated weeks of in-class and remote activities;
Figure \ref{PlotEpidemicOptimalControl}
shows the evolution of the epidemic functions
under the optimal policy \eqref{OptimalPolicy}.
\begin{figure}[h!]
    \centering
    \includegraphics[scale=0.5]{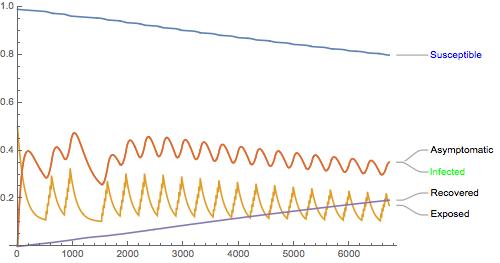}
    \caption{Epidemic functions
under the optimal policy \eqref{OptimalPolicy}. For 
visualization purposes, $E(t)$ is multiplied by $50$, and
$A(t)$ and $I(t)$ are multiplied by $150$. Time is 
in hours.}
    \label{PlotEpidemicOptimalControl}
\end{figure}
Table  \ref{ResultsOptimal} summarizes the 
change in infectious transmissions with the optimal solution;
notice that there is an increase of $64.75\%$ in the number
of the positive cases due to the $90$ days of 
school opening vs. an increase of 
$251\%$ if the school was open all the $200$ days.
This realizes 
 an effective containment of 
  the extra diffusion of the virus, while
  still having the possibility of offering a 
  substantial portion of the teaching in person.

\begin{table}[htbp]
  \centering
  \caption{Performance of optimal and random solutions
  with $90$ days of in-class teaching. First $4$ columns:
  total number of infectious transmissions; last 
  $3$ columns: percentage increase with respect
  to remote only solution.
  }
    \begin{tabular}{lrrrrrrr}
       & \multicolumn{1}{l}{{\tiny Remote}} & \multicolumn{1}{l}{ 
        {\tiny Optimal}} &\multicolumn{1}{l}{{\tiny Random}}& \multicolumn{1}{l}{{\tiny In-class}} &
        \multicolumn{1}{l}{{\tiny Change  w/: Optimal}}&
        \multicolumn{1}{l}{{\tiny Random}}&\multicolumn{1}{l}{{\tiny In-class}} \\ \hline \vspace{0.1cm}
    $\%$ & $ 12.20$ & $20.31$ &$22.96\pm 1.25$ & $42.93$  &
    $64.75$&$88\pm 10.25$ &$251.88$ \\   \hline    \end{tabular}%
  \label{ResultsOptimal}%
\end{table}%

We have also explored the possibility of selecting
at random which weeks to open or close, with the
constraint of having $90$ days of in-class activities.
A simulation of the distribution gives 
then the increment in the number of cases is
$88\%$ on average, with a Standard Deviation of 
$10.25\%$, hence the optimal solution is 
more than $2$ SD's below what the average
random selection would provide.

\vskip 2cm
\newpage

\section{Sensitivity Analysis} \label{SensitivityAnalysis}

We provide a two parts sensitivity analysis for our model.
\bigskip

The first part concerns errors in determination of 
the optimal solution. If one of the weeks 
for which remote teaching is planned by the
optimal solution is instead changed into 
an in-class week, then
the increment in the number of cases
due to school opening ranges from an extra
$70\%$ to an extra $76\%$, compared to the extra
$66\%$ of the optimal solution. If one week 
is moved remote from in-class, then the increment in the number of cases
is reduced to about $62\%$. If the planner had a
value for each in-class day that  lead to
the solution with $90$ in-class days, altering
the number of weeks of opening is not optimal,
but it does not substantially change the final outcome:
therefore, once an optimal policy has been determined, 
extra openings or closures can be adopted if
needed without major changes in the final outcome.
In addition, once the flexible blended model has been adopted,
one can quickly and easily adapt to
 major changes in the external conditions, such 
 as a sudden outbreak or a vaccine.

\bigskip
The second part of the sensitivity analysis
concerns 
 each of the
parameters of the model used to 
determine the optimal solution
\eqref{OptimalPolicy}. For each parameter, we  compare
the increase in the number of cases from the
remote solution, and we contrast this with the 
increase for the complete in-class option.
The results are reported in Table
\ref{SensitivityAnalysis}.

The last column of the table indicates the
acceptable range of parameters:
as the increase is $64.75\%$ if the 
selected parameters are correct (see Table
\ref{ResultsOptimal}), we accept 
up to a two-fold increase in the number of
cases, which is a $100\%$,
and report the resulting range.

\begin{table}[htbp]
  \centering
  \caption{Sensitivity analysis. For each parameter, the second
  column gives the tested range; the third column 
  indicates the change in number of cases compared
  to the change in case of full opening for the
  best value in the range; the fourth column 
  indicates the worst case; the last column
  reports the acceptable range in which 
  the number of cases doubles at the most.
  }
    \begin{tabular}{llrrr}
      parameter & \multicolumn{1}{l}{test range} &  \multicolumn{1}{l}{Minimal effect} & \multicolumn{1}{l}{Maximal effect}& \multicolumn{1}{l}{Accept. range} \\ \hline \vspace{0.1cm}
    $\alpha$ & $0.1 \mbox{-} 15\times 10^{-5} $& $ 16\% \text{ vs. } 32\% \quad$
    & $71\% \text{ vs. } 487\%$
     & $0.1 \mbox{-} 15 \times 10^{-5}$\\ 
     \vspace{0.1cm}
    $\beta$ & $0.1 \mbox{-} 10\times 10^{-2} $&$  0.8\% \text{ vs. } 2\%\quad $&$ 
    391\% \text{ vs. } 664\% $& $0.1 \mbox{-} 5\times 10^{-2} $\\
    \vspace{0.1cm}
    $\gamma (= \delta)$ &$ 0.1 \mbox{-} 10 \times 10^{-3}$&$  23\% \text{ vs. } 64\% \quad$&
    $722\% \text{ vs. } 722\% $&$ 3 \mbox{-} 10 \times 10^{-3}$\\
    \vspace{0.1cm}
    $\rho$ & $0.1 \mbox{-} 10 \times 10^{-3}$& $ 63\% \text{ vs. } 239\% \quad$&
    $224\% \text{ vs. } 630\% $& $1 \mbox{-} 10\times 10^{-3}$ \\
    \vspace{0.1cm}
    $\eta$ & $0 \mbox{-} 1 $&  $60\% \text{ vs. } 221\%\quad $&
    $139\% \text{ vs. } 497\% $&$ 0 \mbox{-} 0.55$ \\
    \hline    \end{tabular}%
  \label{SensitivityAnalysis}%
\end{table}%

We see that changes in the rate of external infections $\alpha$
does not alter the effectiveness of the optimal 
planning, as we always get a substantial reduction from
full opening, and never more than $71\%$ increase; 
the range of $\alpha$ covers the possibilities that
between $1.6\%$
and $64\%$ of the individuals of the school would
be infected outside of the school even with complete closure; this 
covers all possible scenarios of evolution of the
virus.

The internal transmission rate $\beta$ is, on the other
hand, quite significant; if it goes above $5\times 
10^{-2} \times 24 =1.2 $ then the number of cases would more
than double due to the school opening: current
estimates are around $0.3$, but this limitation indicates
that one has to make sure that there are not too many
occasions of infectious transmission within the school.

The incubation period before symptoms are
developed $1/(\gamma+\delta)$ cannot exceed 
$\frac{1}{6 \times 24}10^{3} \approx 7$ days on average, or else the number of cases would more than
double with the optimal school activities: while
this is now estimated to be $5$ days on average,
one needs to monitor whether more accurate studies confirm
this assessment.

The symptomatic period $1/\rho$ also is assumed to
not exceed $\frac{1}{ 24}10^{3} \approx 41$ days
on average, but this is a very safe bound.

Finally, the fraction of undetected asymptomatic
$\eta$ must not exceed $55\%$: we assume
availability of a easy, rapid and cheap
test with a rate of false negative not exceeding
$55\%$, such as CRISPR \cite{Chekani-Azar2020}.

\section{Conclusions}
We have considered the issue of planning in-class activities for 
the school year '20-'21. Education is, in fact,  one of the areas in which the
measures to contain the current Covid-19 pandemic 
have hit the most, with many countries and local authorities
 struggling to find acceptable plans for next year.
 
 To aid such planning, we have set up an optimization problem
aimed at increasing the number of in-class
 vs. remote teaching, while 
containing the number of additional Covid-19 cases
that would be determined by in-class school activities.
The model involves differential equations to simulate
the infectious transmissions, more precisely 
an  SEAIR model with external source
of infection and a control, a loss function
combining the quantities to be minimized,
and a numerical optimization procedure.

Our model has confirmed that 
  the length of the incubation period of Covid-19
  makes it  impractical to have fully in-class activities:
  in a prototypical example, this would lead to an increase
  by more than $250\%$ of the Covid-19 cases
  among the individuals involved with the school.
  
 We have then obtained a numerical expression for the 
curve of optimal strategies parametrized by
the relative importance of in-class days vs.
extra infectious transmissions.
For a typical value of such relative importance, we
have
determined the optimal solution, which appears
to be a blended model,  with alternated weeks
of in-class and remote teaching.
The results turned out to be  rather stable for
possible errors in the estimation of
model parameters, the more critical ones
being the average incubation period, 
the internal transmission rate during school 
opening, and the fraction of detected
asymptomatic individuals.

With the optimal strategy, the increase in the number of cases in the prototypical 
example, is around $66\%$. A random selection of
the weeks of in-class teaching would be much less efficient, 
with an increase of about $88\%$ of the cases, but would
still constitute a sharp reduction from full opening.

We think that this analysis, adapted to specific situations, could
offer a very viable alternative for 
planning the school activities of the 
'20-'21 school year. Some other aspects
of socio-economic life could then 
be arranged around the blended model, in such a way that
most children in the world will be able to enjoy an acceptable
learning experience before a vaccine will hopefully allow the
resumption of the usual school activities.

\small

 \bigskip

Contact address:
NYU Abu Dhabi 
Saadiyat Island
P.O Box 129188
Abu Dhabi, UAE

email: ag189@nyu.edu

\end{document}